\newtheorem{proposition}{Proposition}
\newtheorem{remark}{Remark}
\title{Adaptive MPC for Autonomous Lane Keeping}
\author{{Monimoy Bujarbaruah \& Xiaojing Zhang \& Francesco Borrelli} \\
{\aff{Department of Mechanical Engineering}} \\
{\aff{University of California Berkeley, Berkeley, CA, USA}}
\\
\\
{\authornext{H.\ Eric Tseng}}\\
{\aff{Ford Motor Company}} \\
{\aff{Dearborn, Michigan, USA}}\\
{\aff{\normalfont\normalsize E-mail: monimoy\_bujarbaruah@berkeley.edu}}\\
{\aff{\normalfont\normalsize Topics: Predictive Control, Model Adaptation, Online Learning,  Lateral Control}}
}
\abstract{This paper proposes an Adaptive Robust Model Predictive Control strategy for lateral control in lane keeping problems, where we continuously learn an unknown, but constant steering angle offset present in the steering system. Longitudinal velocity is assumed constant. The goal is to minimize the outputs, which are distance from lane center line and the steady state heading angle error, while satisfying respective safety constraints. We do not assume perfect knowledge of the vehicle lateral dynamics model and estimate and adapt in real-time  the maximum possible bound of the steering angle offset from data using a robust Set Membership Method based approach. Our approach is even well-suited for scenarios with sharp curvatures on high speed, where obtaining a precise model bias for constrained control is difficult, but learning from data can be helpful. We ensure persistent feasibility using a switching strategy during change of lane curvature. The proposed methodology is general and can be applied to more complex vehicle dynamics problems.}
\begin{document}
	
\maketitle

\section{INTRODUCTION}\label{sec:intro}
Lane keeping in (semi-)autonomous driving is an important safety-critical problem. Although simple control design often works well, two challenges can render the problem hard: $(i)$ unknown vehicle parameters \cite{byrne1995design, netto2004lateral} and $(ii)$ satisfaction of safety constraints \cite{gray2012predictive}. 
These issues become relevant in scenarios such as highway driving on sharp curves and/or unknown friction/steering offsets.

For control of constrained, and possibly uncertain systems, Model Predictive Control (MPC) has established itself as a promising tool \cite{mayne2000constrained, morari1999model, borrelli2017predictive}. Dealing with bounded uncertainties in presence of safety constraints is well understood during MPC design and this is done by means of robustifying the constraints \cite{kothare1996robust, Goulart2006, zhang:margellos:goulart:lygeros:13, ZhangKamgGeorghiouGoulLyg_Aut16}. Therefore, the use of MPC to synthesize safe control algorithms for vehicle dynamics problems is ubiquitous \cite{carvalho2015automated, carrau2016efficient, bujarbaruahtorque, liniger2017racing}. MPC based Advanced Driver Assistance Systems have been one of the most important research directions for increasing safety and mitigating road accidents. Such frameworks for vehicle lane keeping and lateral dynamics control is presented in~\cite{katzourakis2014road, Borrelli05, falcone2008mpc, choi2016mpc}. 

However, even though all the aforementioned work tackle the issue of recursive constraint satisfaction under system uncertainties, the problem of real-time adaptation of an unknown vehicle model with guarantees of constraint satisfaction, has not been thoroughly addressed. This is indeed an important problem to look into, as a recursively improved vehicle model estimate can result in increased comfort and safety over time. Data driven frameworks for learning a vehicle model are proposed in \cite{ostafew2014learning, bojarski2016end, carrau2016efficient, liniger2017racing}, but few theoretical guarantees can be established with such approaches. To address this issue, we propose an \emph{Adaptive Robust Model Predictive Control} algorithm for autonomous lane keeping. 

Adaptive controls for unconstrained systems has been widely studied and developed \cite{sastry2011adaptive, ioannou1996robust}. In recent times, this concept of online model adaptation has been extended to MPC controller design as well \cite{tanaskovic2014adaptive, lorenzen2017adaptive}. We use a similar method for designing our steering control. The model adaptation framework used in this work for estimation of vehicle model uncertainties, is built on the work of \cite{tanaskovic2014adaptive, lorenzen2017adaptive, 2018arXiv180409790B, 2018arXiv180409831B}.

In this paper, we address the problem of lane keeping under hard constraints on road boundaries and steering angle inputs. The longitudinal velocity of the vehicle is kept constant with a low level cruise controller, and we deal with only the lateral dynamics. We consider the control design problem when there exists an offset in steering angle of the vehicle, which is not exactly known to the control designer. For simplicity we have  assumed that this offset is constant with time. However, this assumption can be relaxed with appropriate bounds on maximum rate of change of the offset \cite{2017arXiv171207548T}. At every time-step, we estimate a domain, where the actual steering offset is guaranteed to belong, which we name the \emph{Feasible Parameter Set}. This Feasible Parameter Set is refined at each time, as new measurements from the steering system are obtained, thus introducing \emph{adaptation} in our algorithm. With our MPC controller, we ensure all imposed constraints are robustly satisfied at each time for all offsets in the Feasible Parameter Set.

The contributions of this paper can thus be summarized as follows:
\begin{enumerate}
    \item We introduce recursive adaptation of an unknown steering offset in an MPC framework. We guarantee robust satisfaction of imposed operating constraints in closed loop along with model adaptation. The framework yields a convex optimization problem, which can be solved in real time. 
    
    \item We guarantee recursive feasibility \cite[Chapter~12]{borrelli2017predictive} of the proposed MPC controller on a road patch of fixed curvature. On a variable curvature road, the proposed MPC algorithm is accordingly modified to a switching strategy. 
\end{enumerate}
The paper is organized as follows: in section~\ref{sec:model_des} we describe the vehicle model used. The task of lane-keeping with operating constraints is formulated in section~\ref{sec:prob_form}. Section~\ref{sec:ampc} presents the recursive model estimation algorithm and introduces the MPC problem solved. In section~\ref{sec:simul} we show numerical simulations and comparisons, and section~\ref{sec:concl} concludes the paper, laying out directions for future work. 

\section{MODEL DESCRIPTION}
\label{sec:model_des}

We consider a standard lane keeping problem. A longitudinal control is considered given, and we focus only on lateral control design. We use the coordinate system defined about the center-line of the road \cite[sec.~2.5]{rajamani2011vehicle}, parametrized by the parameter $s$, which denotes distance \emph{along} the road center-line. The state space model of the car used for path following application is hence given by,  
\begin{equation}\label{eq:car_M_brief}
\dot{x}_t = A^\textnormal{c} x_t + B_1^\textnormal{c} \delta_t + B_2^\textnormal{c} r(s),
\end{equation}
where $x \in \mathbb{R}^n= [ {{e}_{\textnormal{cg}}} ~  \dot{e}_{\textnormal{cg}} ~ \delta {\psi} ~ \delta \dot{\psi}]^\top$ \cite[eq.~(2.45)]{rajamani2011vehicle}. 
Here, $e_{\textnormal{cg}}$ is the lateral position error of the vehicle's center of gravity with respect to the lane center-line, $\delta \psi$ is the yaw angle difference between the vehicle and the road, $\delta \in \mathbb{R}^m$ is the front wheel steering angle input, and $r(s)= \mathcal{C}(s) V_x$ is the yaw rate, determined by road curvature $\mathcal{C}(s)$ and vehicle longitudinal speed $V_x$. The matrices $A^\textnormal{c}\in \mathbb{R}^{n \times n}$, $B_1^\textnormal{c} \in \mathbb{R}^{n \times m}$ and $B_2^\textnormal{c} \in \mathbb{R}^{n \times m}$ are shown in the Appendix. In this particular case, $n=4$  and $m=1$, but we will maintain symbolic notation throughout the paper to present a general formulation for systems with arbitrary number of states and inputs. 

It is well-known that the system (\ref{eq:car_M_brief}) is not open loop stable in general. Therefore, a stabilizing closed loop input can be chosen as, 
\begin{align}\label{eq:steer_angle_form}
  \delta_t & = -Kx_t + \delta_{\textnormal{ff}}(s),  
\end{align}
where the feedback matrix $K$ is chosen so that the closed loop matrix $A^\textnormal{c}-B_1^\textnormal{c}K$ is stable (i.e. all the eigenvalues have negative real parts).
If given a fixed road curvature $\mathcal{C}(s)$, the feed-forward steering command $\delta_{\textnormal{ff}}(s)$ is chosen as given by \cite[eq.~(3.12)]{rajamani2011vehicle}, 
\begin{align*}
    \delta_{\textnormal{ff}}(s)  = (l_\textnormal{f} + l_\textnormal{r})\mathcal{C}(s) + K_v V_x^2\mathcal{C}(s)+K_{[1,3]}(l_\textnormal{r}\mathcal{C}(s)+ \alpha_\textnormal{r}(s)),
\end{align*}
where $l_\textnormal{f}$ and $l_\textnormal{r}$ are longitudinal distance from vehicle center of gravity to front and rear axles respectively, $K_v$ is the under-steer gradient, $V_x$ is the vehicle longitudinal speed, and $\alpha_\textnormal{r}(s)$ is the slip angle at rear tires on the road of curvature $\mathcal{C}(s)$. Here $K_{[1,3]}$ denotes the element in first row and third column of the feedback matrix $K$ (as we have just a scalar input, $\delta_t$). The resultant steady state trajectory $x_{\textnormal{ss}}(s)$ and input $\delta=\delta_{\textnormal{ss}}(s)$ obtained are given as \cite[eq.~(3.14)]{rajamani2011vehicle},
\begin{subequations}\label{eq:ss_vals}
\begin{align}
x_{\textnormal{ss}}(s) & = [0, 0, -l_\textnormal{r}\mathcal{C}(s)+\alpha_\textnormal{r}(s), 0]^\top,\\
\delta_{\textnormal{ss}}(s) & = (l_\textnormal{f}+l_\textnormal{r})\mathcal{C}(s)+K_v V_x^2 \mathcal{C}(s).
\end{align}
\end{subequations}


\section{PROBLEM FORMULATION}
\label{sec:prob_form}

In this paper, we consider deviations from the steady state trajectory (\ref{eq:ss_vals}), given a road curvature $\mathcal{C}(s)$, and we aim to regulate such deviations (errors) while imposing constraints. Accordingly, we consider the error model from the steady state trajectory (\ref{eq:ss_vals}) as,
\begin{align}\label{eq:err_model_c}
\Delta \dot{x}_t = A^\textnormal{c}\Delta x_t + B_1^\textnormal{c}\Delta\delta_t + E^\textnormal{c} \theta_\textnormal{a},
\end{align}
where $\Delta x_t = x_t-x_{\textnormal{ss}}(s)$ and $\Delta \delta_t = \delta_t - \delta_{\textnormal{ss}}(s)$. We assume the presence of an offset in the steering system of the vehicle, that is modeled with a parameter vector $\theta_\textnormal{a} \in \mathbb{R}^p$, which enters the dynamics (\ref{eq:err_model_c}) linearly with a matrix $E^\textnormal{c} \in \mathbb{R}^{n \times p}$. We discretize system (\ref{eq:err_model_c}) using forward Euler method with sampling time $T_{\textnormal{s}}$, and obtain:
\begin{align}\label{eq:discr_sys}
    \Delta x_{t+1} & = A\Delta x_t + B_1\Delta \delta_t + E\theta_\textnormal{a} + w_t, 
\end{align}
where $A = (I+T_{\textnormal{s}}A^\textnormal{c})$, $B_1 = T_{\textnormal{s}}B_1^\textnormal{c}$, $E = T_\textnormal{s}E^\textnormal{c}$. We consider a bounded uncertainty $w_t \in \mathbb{W}$ introduced by discretization and potential process noise components, where $\mathbb{W}$ is considered convex and polytopic. We impose constraints of the form,
\begin{align}\label{eq:constraints}
    C \Delta x_t + D \Delta \delta_t \leq b,
\end{align}
which must be satisfied for all uncertainty realizations $w_t\in\mathbb{W}$. The matrices $C \in\mathbb{R}^{s\times n}$, $D \in\mathbb{R}^{s \times m}$ and $b\in\mathbb{R}^{s}$ are assumed known. The control objective is to keep $\Delta x_t$ small, while satisfying tracking error constraints in input and states given by (\ref{eq:constraints}). Our goal is to design a controller that solves the infinite horizon robust optimal control problem:
\begin{equation}\label{eq:OP_inf}
	\begin{array}{clll}
		\hspace{0cm} V^{\star}(\Delta x_S) = \\ [1ex]
	\displaystyle\min_{\Delta \delta_0,\Delta \delta_1(\cdot),\ldots} & \displaystyle\sum\limits_{t\geq0} \ell \left( \Delta \bar{x}_t, \Delta \delta_t\left(\Delta\bar{x}_t\right) \right) \\[1ex]
		\text{s.t.}  & \Delta x_{t+1} = A\Delta x_t + B_1 \Delta \delta_t(\Delta x_t) \\
	   &~~~~~~~~~~~~~~~~~~~~~~~~~+ E\theta_{\textnormal{a}} +w_t,\\[1ex]
		& C\Delta x_t + D \Delta \delta_t \leq b,\ \forall w_t \in \mathbb W,\ \\[1ex]
		&  \Delta x_0 = \Delta x_S,\ t=0,1,\ldots,
	\end{array}
\end{equation}
where $\theta_{\textnormal{a}}$ is the constant steering offset present in the steering system and $\Delta \bar{x}_t$ denotes the disturbance-free nominal state. The nominal state is propagated in time with (\ref{eq:discr_sys}) excluding the effects of offset and additive uncertainties, i.e. $\Delta \bar x_{t+1} = A\Delta \bar x_t + B_1 \Delta \bar \delta_t$. It is utilized to obtain the nominal cost, which is minimized in optimization problem (\ref{eq:OP_inf}). We point out that, as system (\ref{eq:discr_sys}) is uncertain, the optimal control problem (\ref{eq:OP_inf}) consists of finding input policies $[\Delta \delta_0,\Delta \delta_1(\cdot),\Delta \delta_2(\cdot),\ldots]$, where $\Delta \delta_t: \mathbb{R}^{n}\ni \Delta x_t \mapsto \Delta \delta_t = \Delta \delta_t(\Delta x_t)\in \mathbb{R}^m$ are state feedback policies. 

In this paper, we approximate a solution to problem (\ref{eq:OP_inf}), by solving a finite time constrained optimal control problem. Moreover, we assume that steering offset $\theta_{\textnormal{a}}$ in (\ref{eq:OP_inf}) is not known exactly. Therefore, we propose a parameter estimation framework to refine our knowledge of $\theta_{\textnormal{a}}$ and thus, improve lateral controller performance.

\begin{remark}
Instead of considering the error dynamics (\ref{eq:err_model_c}) for control design, we can also discretize (\ref{eq:car_M_brief}) and formulate a control design problem. In that case, road curvature $\mathcal{C}(s)$ appears as an exogenous ``disturbance'' input through the term $r(s)$. Hence, we must design robust control algorithms for the worst possible values of $\mathcal{C}(s)$, which results in extremely conservative control. Therefore, in our method, we forgo this conservatism to attain performance (in terms of ability to handle higher $V_x$). As shown later in this paper, this results in a switching strategy and potential constraint violations upon sudden high change of $\mathcal{C}(s)$ at high $V_x$. Therefore, one might choose either formulation, bearing in mind the aforementioned performance vs safety trade-off.
\end{remark}

\begin{remark}
It is worth noting that any uncertainty in the vehicle parameters such as tire friction coefficients and mass etc, appear as parametric uncertainties in the matrices $A$ and $B_1$ in (\ref{eq:discr_sys}). One can also upper bound effect of such uncertainties with an additive uncertain term and propagate the system dynamics (\ref{eq:discr_sys}) with a chosen set of nominal $A,B_1$ matrices. However, for the sake of ease of numerical simulations, we have focused on only an additive steering angle offset. This is done without the loss of generality of the proposed approach.  
\end{remark}


\section{ADAPTIVE MPC ALGORITHM}
\label{sec:ampc}

In this section, we present the proposed \emph{Adaptive Robust MPC} algorithm for lane keeping with constraints. We also formulate the Set Membership Method based steering offset estimation, which yields model adaptation in our framework.  

\subsection{Steering Offset Estimation}
\label{ssec:offset_adap}

We characterize the knowledge of the steering offset $\theta_{\textnormal{a}}$ by its domain $\Theta$, called the \emph{Feasible Parameter Set}, which we estimate from previous vehicle data. This is initially chosen as a polytope
$\Theta_0$. The set is then updated after each time step upon gathering input-output data. The updated \emph{Feasible Parameter Set} at time $t$, denoted by $\Theta_t$, is given by,
\begin{multline} \label{eq: fps_update}
    \Theta_t =  \{\theta_t \in \mathbb{R}^p: \Delta  x_{t}-A\Delta x_{t-1}-B_1 \Delta \delta_{t-1}-E\theta_t \in \mathbb{W}, \\~\forall t \geq 0\},
\end{multline}
where $\Delta x_t,~\forall t\geq 0$ denotes the realized trajectory in closed loop. It is clear from (\ref{eq: fps_update}) that as time goes on, new data is progressively added to improve the knowledge of $\Theta$, without discarding previous information. For any $t$, knowledge from all previous time instants is included in $\Theta_t$. Thus, updated Feasible Parameter Sets are obtained with intersection operations on polytopes, and so $\Theta_{t+1} \subseteq \Theta_t$ for all $t\geq 0$. 

\subsection{Control Policy Approximation}\label{ssec:con_pol}

We consider affine state feedback policies $\pi(.)$ of the form as in \cite[Chapter~3]{kouvaritakis2016model}
\begin{align}\label{eq:cont_policy}
\pi_t(.): \Delta\delta_t(\Delta x_t)= -K \Delta x_t + v_t.    
\end{align}
where $K\in\mathbb{R}^{m \times n}$ is the fixed stabilizing state feedback gain introduced in (\ref{eq:steer_angle_form}) and $v_t$ is an auxiliary control input. 

One might also consider affine disturbance feedback policies introduced in \cite{Goulart2006}, that enable optimization over feedback gain matrices, as this is a convex optimization problem. That is not the case for state feedback policies (\ref{eq:cont_policy}), where optimizing over $K$ is non-convex. We use (\ref{eq:cont_policy}) for the sake of simplicity. Note that the subsequent analysis \emph{stays valid} when switched to disturbance feedback policies. 

\subsection{Robust MPC Problem}
We need to ensure that constraints (\ref{eq:constraints}) are satisfied $\forall w_t \in \mathbb W$, in presence of the unknown steering offset $\theta_{\textnormal{a}}$. Therefore, (\ref{eq:constraints}) are imposed for all potential steering offsets in the Feasible Parameter Set, i.e. $\forall \theta_t \in 
\Theta_t$ and for all $t \geq 0$. So we can reformulate (\ref{eq:OP_inf}) as a tractable finite horizon robust MPC problem as,
\begin{equation}\label{eq:rob_MPC}
	\begin{array}{clll}
	\displaystyle \min_{v_{t|t},\ldots,v_{t+N-1|t}} &  \displaystyle\sum\limits_{k=t}^{t+N-1} \ell \left( \Delta \bar{x}_{k|t}, v_{k|t} \right)  \\[1ex]
		\text{s.t.}  & \Delta x_{k+1|t} = A_{\textnormal{cl}}\Delta x_{k|t} + B_1v_{k|t} \\
		&~~~~~~~~~~~~~~~~~~~~+ E\theta_{k|t} + w_{k|t},\\[1ex]
		& C \Delta x_{k|t} + D \Delta \delta_{k|t} \leq b, \\
		& \Delta x_{t|t}= \Delta x_{t},~ \Delta x_{t+N|t} \in\mathcal{X}^N_t,\\&~~~~~~~~~~\forall w_{k|t} \in \mathbb{W},~\forall \theta_{k|t} \in \Theta_t,\\[1ex]
		& k=t,\ldots,t+N-1,
	\end{array}
\end{equation}

where $A_{\textnormal{cl}} = (A-B_1K)$ is the stable closed loop matrix for error state $\Delta x$ dynamics after applying the control policy (\ref{eq:cont_policy}). Moreover, $\mathcal{X}^N_t = \{x: Y_tx \leq z_t,~Y_t \in \mathbb{R}^{r_t \times n},~z_t \in \mathbb{R}^{r_t}\}$ is the terminal state constraint set for the MPC problem, which is chosen to ensure recursive feasibility \cite[sec.~12.3]{borrelli2017predictive} of (\ref{eq:rob_MPC}) in closed loop for a patch of road with curvature $\mathcal{C}(s)$. The properties of the set $\mathcal{X}^N_t$ is elaborated in the following section of this paper. 

After solving the optimization problem (\ref{eq:rob_MPC}) at time $t$, we apply the closed loop control policy as,
\begin{align}\label{eq:cl_control}
\Delta \delta_t^\star (\Delta x_t) = -K \Delta x_t + v^\star_{t|t}.   
\end{align}
We then resolve (\ref{eq:rob_MPC}) and continue the process of applying only the first input in closed loop. This yields a receding horizon strategy. In Appendix, we present a formulation of (\ref{eq:rob_MPC}) so that it can be efficiently solved in real-time with existing solvers.

\subsection{Terminal Set and Recursive Feasibility}
The terminal set $\mathcal{X}^N_t$ is chosen as the robust positive invariant set \cite{Goulart2006} for the system (\ref{eq:discr_sys}) with a feedback controller $u = -Kx$ and for all $w \in \mathbb{W},~\theta \in \Theta_t$. This set has the properties that,
\begin{multline}\label{eq:term_set}
    \mathcal{X}^N_t = \{x: A_{\textnormal{cl}}x + w + E\theta \in \mathcal{X}^N_t,~\forall w \in \mathbb{W},~\forall \theta \in \Theta_t,\\
    (C-DK)x \leq b\}.
\end{multline}
In other words, once any state $x$ lies within the set $\mathcal{X}^N_t$, it continues to do so indefinitely, despite all values of feasible offsets and uncertainties, satisfying all imposed constraints. Algorithms to compute such an invariant set can be found in \cite{borrelli2017predictive, kouvaritakis2016model}.

\begin{remark}
It is to be noted that due to the property $\Theta_{t+1} \subseteq \Theta_t$, we have $\mathcal{X}^N_{t+1} \supseteq \mathcal{X}^N_t,~\forall t\geq 0$. Therefore, if an invariant terminal set is computed at any time instant $t_0$, it continues to be a valid terminal set for all $t \geq t_0$. We have nonetheless chosen to compute $\mathcal{X}^N_t$ repeatedly for all $t$, ensuring that model adaptation (\ref{eq: fps_update}) can progressively lower conservatism while solving (\ref{eq:rob_MPC}).  
\end{remark}

\begin{proposition}
With the terminal set $\mathcal{X}^N_t$ as chosen above, the MPC problem (\ref{eq:rob_MPC}) along with controller (\ref{eq:cl_control}) in closed loop is recursively feasible on a road of fixed curvature $\mathcal{C}(s)$ , that is, if it is feasible at $t=0$, it is feasible for all $t \geq 0$ along that road. 
\end{proposition}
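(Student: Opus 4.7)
My plan is to use the standard candidate-solution argument for robust MPC recursive feasibility, adapted to account for the time-varying Feasible Parameter Set and the corresponding time-varying terminal set. Suppose problem (\ref{eq:rob_MPC}) is feasible at time $t$ with optimal auxiliary inputs $v^{\star}_{t|t},\ldots,v^{\star}_{t+N-1|t}$ and the associated predicted state tube $\Delta x_{k|t}$ satisfying the robust constraints at every step, together with $\Delta x_{t+N|t}\in\mathcal{X}^N_t$. I would construct a candidate input sequence at time $t+1$ by shifting: $\tilde v_{k|t+1}=v^{\star}_{k|t}$ for $k=t+1,\ldots,t+N-1$, and appending a terminal step $\tilde v_{t+N|t+1}=0$ so that the last policy reduces to $\Delta\delta_{t+N}=-K\Delta x_{t+N}$ from (\ref{eq:cont_policy}).

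The first thing to verify is that the realized state $\Delta x_{t+1}$, which becomes the initial condition of the time-$(t+1)$ problem, already belongs to the time-$t$ predicted tube. Since the true offset $\theta_\textnormal{a}$ lies in $\Theta_t$ by the Set Membership update (\ref{eq: fps_update}) (assuming $\theta_\textnormal{a}\in\Theta_0$) and the realized noise $w_t\in\mathbb{W}$, the realized $\Delta x_{t+1}$ is just one particular instance of $\Delta x_{t+1|t}$. Robustness of (\ref{eq:rob_MPC}) at time $t$ then guarantees that the state and input constraints $C\Delta x+D\Delta\delta\le b$ hold under the shifted inputs for $k=t+1,\ldots,t+N-1$, for all $w_{k|t+1}\in\mathbb{W}$ and all $\theta_{k|t+1}\in\Theta_{t+1}\subseteq\Theta_t$, where the inclusion follows from the discussion after (\ref{eq: fps_update}).

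The appended terminal step requires that the whole predicted set of $\Delta x_{t+N|t+1}$ under the shifted candidate lies in $\mathcal{X}^N_{t+1}$, and that the constraint $(C-DK)\Delta x_{t+N|t+1}\le b$ is met. By construction $\Delta x_{t+N|t}\in\mathcal{X}^N_t$, and by the remark preceding the proposition $\mathcal{X}^N_t\subseteq\mathcal{X}^N_{t+1}$, so the terminal condition is inherited. Invariance of $\mathcal{X}^N_{t+1}$ in the sense of (\ref{eq:term_set}) under $u=-Kx$ for all $\theta\in\Theta_{t+1}$ and $w\in\mathbb{W}$ then yields both the constraint bound and $A_{\textnormal{cl}}\Delta x_{t+N|t+1}+E\theta+w\in\mathcal{X}^N_{t+1}$, closing the induction.

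The main obstacle, as I see it, is the careful interplay between the parameter adaptation and the time-varying terminal set: one must simultaneously invoke $\theta_\textnormal{a}\in\Theta_t$ (so that realized trajectories stay inside the predicted tube), the monotone inclusion $\Theta_{t+1}\subseteq\Theta_t$ (so that the shifted open-loop inputs remain robust against the shrunken parameter set), and $\mathcal{X}^N_t\subseteq\mathcal{X}^N_{t+1}$ (so that the old terminal state is still a valid terminal state for the new problem). This also clarifies why the statement is restricted to a patch of fixed curvature: if $\mathcal{C}(s)$ changed, the steady-state trajectory $x_{\textnormal{ss}}(s),\delta_{\textnormal{ss}}(s)$ in (\ref{eq:ss_vals}) would shift, invalidating both the shifted candidate and the invariance certificate of $\mathcal{X}^N_t$, which is precisely the reason a switching strategy is introduced at curvature transitions.
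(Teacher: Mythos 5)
Your proposal is correct and follows essentially the same route as the paper's proof: shift the optimal policy sequence from time $t$, append the terminal feedback $-K\Delta x$ (equivalently $v=0$ in (\ref{eq:cont_policy})), and invoke $\Theta_{t+1}\subseteq\Theta_t$ together with the terminal-set properties to certify feasibility at $t+1$. Your version is in fact more explicit than the paper's on two points it leaves implicit --- that $\theta_{\textnormal{a}}\in\Theta_t$ places the realized $\Delta x_{t+1}$ inside the time-$t$ predicted tube, and that $\mathcal{X}^N_t\subseteq\mathcal{X}^N_{t+1}$ is what transfers the terminal condition --- but these are details of the same argument, not a different one.
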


\begin{proof}
Let us consider problem (\ref{eq:rob_MPC}) is solved successfully at time $t$. Now let us denote the optimal steering input policies obtained at time $t$ be given by $[\pi^\star_{t|t}(.),\pi^\star_{t+1|t}(.),\cdots,\pi^\star_{t+N-1|t}(.)]$. Now, consider a policy sequence at the next time instant as
\begin{align}\label{eq:feas_seq_next_DF}
    \pi_{t+1}(.) = [\pi^*_{t+1|t}(.),\pi^*_{t+2|t}(.),\cdots,\pi^*_{t+N-1|t},-Kx],
\end{align}
assuming the road curvature is unaltered during the time span. With this feasible input policy sequence, we know that (\ref{eq:constraints}) are going to be satisfied, since from the property of Feasible Parameter Sets, $\Theta_{t+1} \subseteq \Theta_{t}$. Moreover, this also guarantees terminal state satisfies $x_{t+N+1|t+1} \in \mathcal{X}^N_{t+1}$. Therefore, the sequence (\ref{eq:feas_seq_next_DF}) is a feasible input sequence for the $(t+1)^{\textnormal{th}}$ time instant. Hence the MPC algorithm is recursively feasible. This concludes the proof.
\end{proof}

Therefore, with our algorithm we can guarantee that once operating constraints are met along a road patch of curvature $\mathcal{C}(s)$, they continue to do so as long as the curvature remains the same. This is important to ensure safety, once (\ref{eq:cl_control}) is applied in closed loop to (\ref{eq:discr_sys}). The algorithm can be summarized as:

\begin{algorithm}
    \caption{
Adaptive MPC for Lane Keeping
    }
    \label{alg1}
        \begin{algorithmic}[1]
\WHILE{$\theta_a = \textnormal{constant}$}
      \STATE Obtain road curvature $\mathcal{C}(s)$. Compute corresponding steady state trajectory $x_{\textnormal{ss}}(s)$ and steering angle input $\delta_{\textnormal{ss}}(s)$. Set $t=0$; initialize Feasible Parameter Set $\Theta_0$. 
      
      \WHILE{$\mathcal{C}(s)$ unchanged w.r.t. step~2}     
      \STATE Compute the terminal invariant set $\mathcal{X}^N_t$. Compute $v_{t|t}^{\star}$ from (\ref{eq:rob_MPC}) and apply steering command $\delta_t = \delta_{\textnormal{ss}}(s)-K(x_t-x_{\textnormal{ss}}(s))+ v_{t|t}^{\star}$. 

     \STATE Update $\Theta_{t+1}$ using (\ref{eq: fps_update}). Set $t=t+1$. 
     
     \STATE Estimate $\mathcal{C}(s)$. 
     \ENDWHILE
     
     \STATE set $\Theta_0 = \Theta_{t}$. Return to step~2.
    
    \ENDWHILE
    \end{algorithmic}
\end{algorithm}

The most important caveat of Algorithm~1 is that we can only guarantee recursive feasibility of (\ref{eq:rob_MPC}), as long as the road curvature $\mathcal{C}(s)$ stays constant. When a curvature change is detected, say at time $t_f$, we reset $\Delta x_0$ and $\Delta \delta_0$ at that time, measuring from the new reference trajectory $x_{\textnormal{ss}}(s)$ and $\delta_{\textnormal{ss}}(s)$. We then start solving the robust MPC problem (\ref{eq:rob_MPC}) again, with the Feasible Parameter Set starting at $\Theta_{t_f}$. This results in a switching strategy. During transition between curve switches, constraints (\ref{eq:constraints}) are softened for feasibility of (\ref{eq:rob_MPC}), and any violation is heavily penalized. As mentioned previously in Remark~1, such constraint violations can be avoided at the cost of very highly restrictive longitudinal velocity $V_x$ limits, by formulating an optimal control problem robust to all possible curvatures $\mathcal{C}(s),~\forall s$.




\begin{remark}
For an appropriately chosen $\Theta_0$, one can also assume that the steering offset parameter $\theta_{\textnormal{a}} \in \Theta_0$ for all times. This is useful for resetting the Feasible Parameter Set to $\Theta_0$, in case any time variation in offset is suspected after prolonged operation. We can then restart the algorithm. This allows immediate adaptability of the proposed algorithm, without having to re-tune a controller from the scratch. 
\end{remark}


\section{SIMULATION RESULTS}\label{sec:simul}
In this section we present a detailed numerical example with the proposed Adaptive Robust MPC algorithm. The vehicle parameter values chosen for the simulation are given in Table~1.

\begin{table}[ht]\label{tab:table1}
	\renewcommand{\arraystretch}{1.9}
	\centering
	 \begin{tabular}{|c |c |}
		\hline
		\bfseries Parameter & \bfseries Value \\
		\hline\hline
		$M$ & $1830$ kg\\ $J_z$ & $3477$ kgm$^2$\\
		$l_{\textnormal{f}}$ & $1.152$ m \\$l_{\textnormal{r}}$ & $1.693$ m\\
		$c_{\textnormal{f}}$ & $40703$ N/rad \\ $c_{\textnormal{r}}$ & $64495$ N/rad\\
		\hline
	\end{tabular}
	\caption{Vehicle Model Parameters}
\end{table}
We consider the task of following the center-line of a lane at longitudinal speed of $V_x = 30~\textnormal{m/s}$. The reference lane is chosen as a circular arc of curvature $\mathcal{C}(s) = 1$ meters. 

We impose constraints on the deviation from the steady state trajectories' lane center-line and heading angle error. Moreover, constraints are also imposed on the steering angle deviation from the steady state value $\delta_{\textnormal{ss}}(s)$ found for a fixed curvature. All the design parameters are elaborated in Table~2.

Also, uncertainty in (\ref{eq:discr_sys}) $w_t \in \mathbb W = \{w \in \mathbb R^4: ||w||_{\infty} \leq 0.5 \}$. The initial Feasible Parameter Set is defined as 
\begin{align}\label{eq:init_FPS_sim}
    \Theta^{(0)} = \{ \theta \in \mathbb{R}^2: \begin{bmatrix}-0.2 \\ -0.3 \\
	\end{bmatrix} \leq \theta \leq \begin{bmatrix}0.2 \\ 0.3
	\end{bmatrix}\}.  
\end{align}
The matrix $E \in \mathbb{R}^{4 \times 2}$ is picked as a matrix of ones. The feedback gain $K$ in \eqref{eq:cont_policy} is chosen as optimal LQR gain for system \eqref{eq:discr_sys} with parameters $Q$ and $R$ defined in Table~2. 
The linear programs arising in the optimization problem are solved with GUROBI solver \cite{gurobi2015gurobi} in MATLAB. 
\begin{table}[t]\label{tab:table2}
	\renewcommand{\arraystretch}{1.2}
	\centering
	 \begin{tabular}{|c |c|}
		\hline
		\bfseries Parameter & \bfseries Value \\
		\hline\hline
		$p$ & $2$\\
	   $\Delta e_{\textnormal{cg}}$ & $[-4,4]$~m\\
	  $\Delta \delta \psi$ & $[-0.7,0.7]$~rad\\
	  $\Delta \delta$ & $[-\frac{\pi}{3},\frac{\pi}{3}]$~rad\\
	  $\theta_{\textnormal{a}}$&$[-0.17,0.26]^\top$\\
	  $T_\textnormal{s}$ & $0.1$\\
		$Q$ & $\textbf{diag}(2,2,2,2)$ \\ $R$ & $\textbf{diag}(1)$ \\
		$N$&$6$\\ \hline
	\end{tabular}
	\caption{MPC Design Parameters}
\end{table}
We illustrate two key aspects of the algorithm, namely $(i)$ recursive constraint satisfaction despite model mismatch, and $(ii)$ ease of reset in case of slight variation in steering offset, avoiding the need to re-tune.
\subsection{Robust Constraint Satisfaction}
In Algorithm~1, uncertainty is adapted in (\ref{eq: fps_update}) using known bounds of $\mathbb{W}$. So, it is always ensured that the unknown true steering offset $\theta_{\textnormal{a}}$ always lies within the Feasible Parameter Set $\Theta_t$ for all values of time. This can be seen in Fig.~\ref{fig:fps}.
\begin{figure}[H] 
	\centering
	\includegraphics[width=0.5\textwidth]{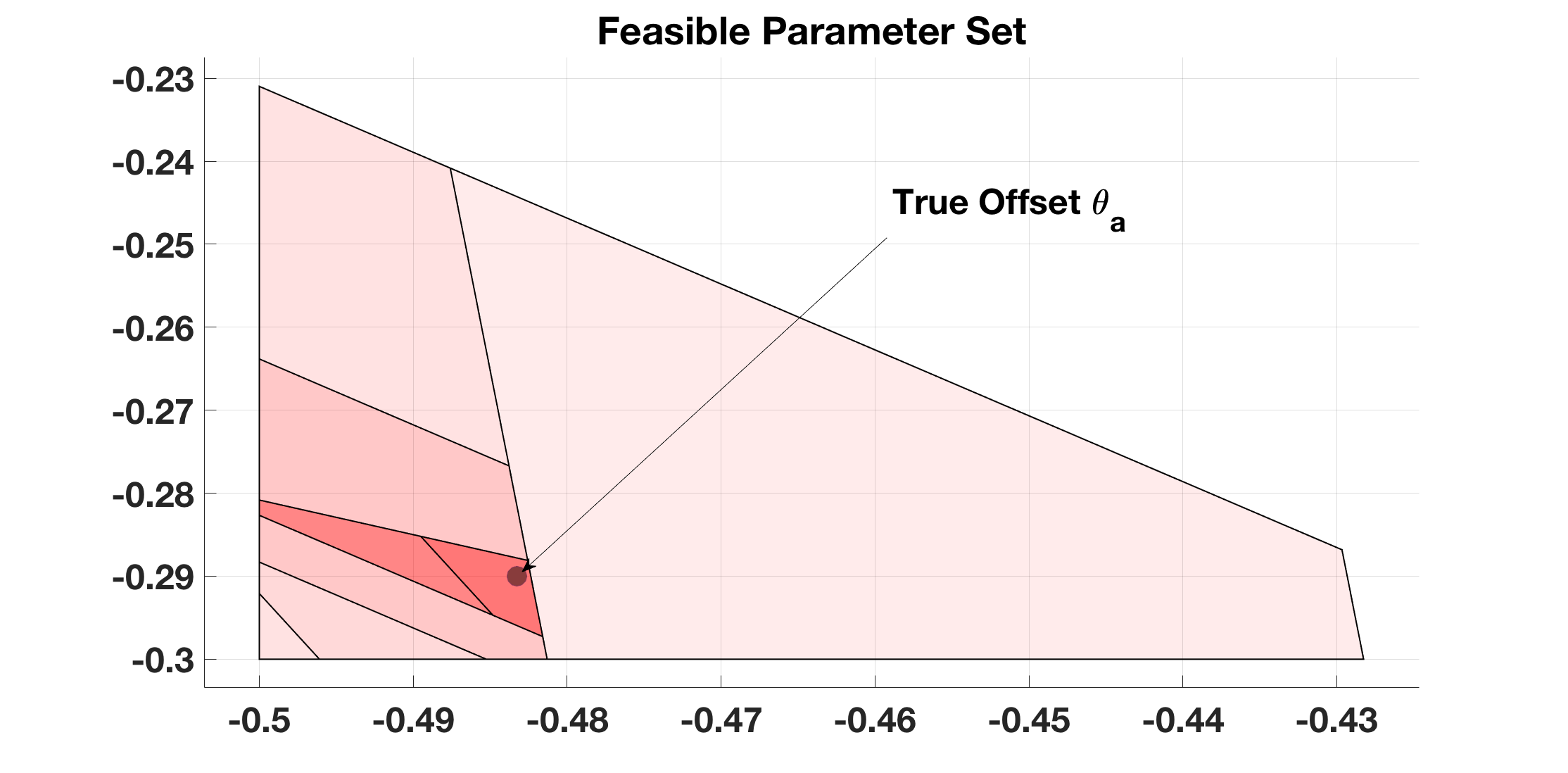}
	\caption{Feasible Parameter Set Adaptation}
	\label{fig:fps}
\end{figure}
Due to the above property, the proposed algorithm attains robustness against the unknown steering offset $\theta_{\textnormal{a}}$. We highlight this from Fig.~\ref{fig:Sconstr_bc} and Fig.~\ref{fig:Uconstr_bc}. 
\begin{figure}[ht] 
	\centering
	\includegraphics[width=0.5\textwidth]{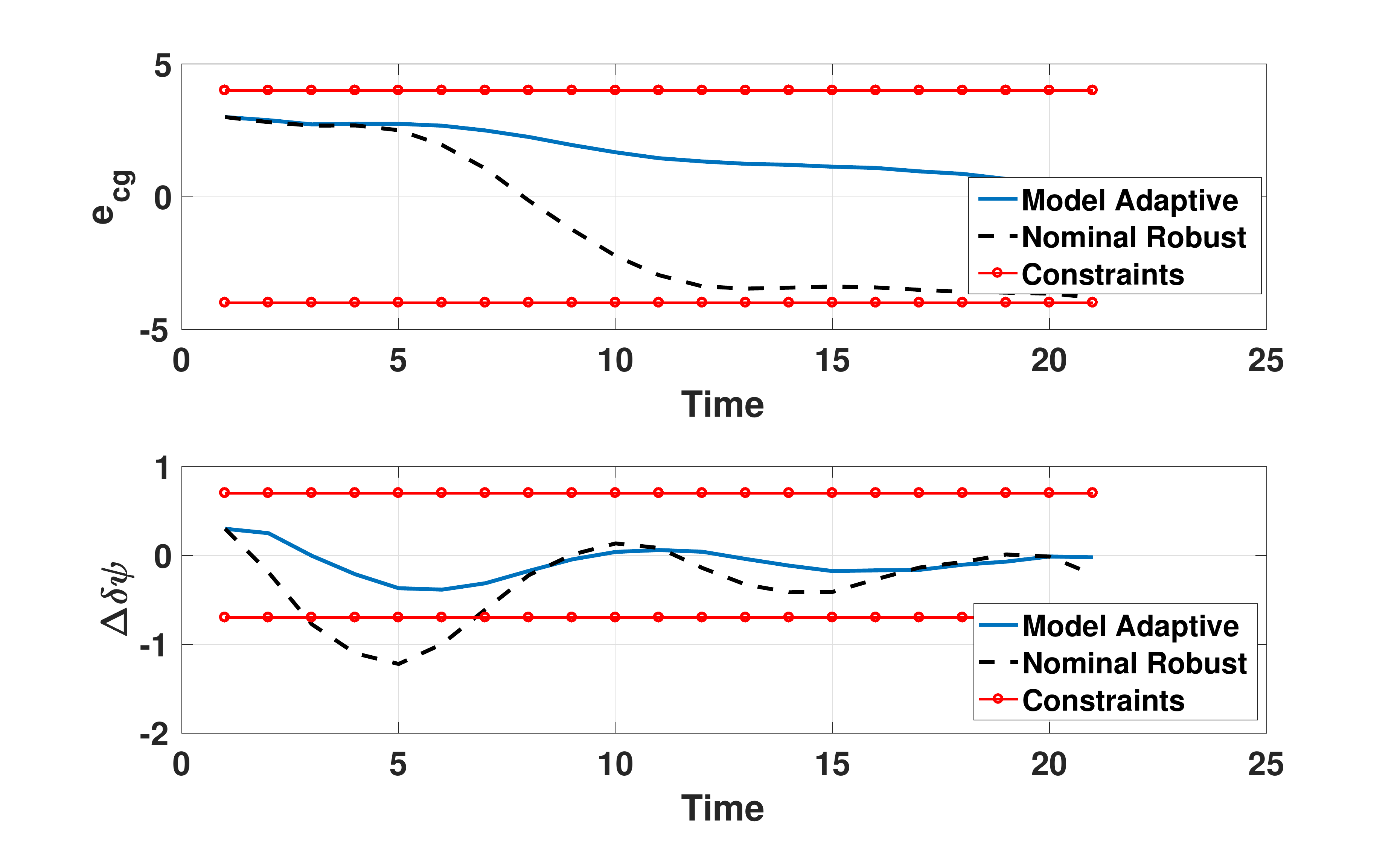}
	\caption{Adaptive vs Standard Robust MPC States}
	\label{fig:Sconstr_bc}
\end{figure}
\begin{figure}[H] 
	\centering
	\includegraphics[width=0.5\textwidth]{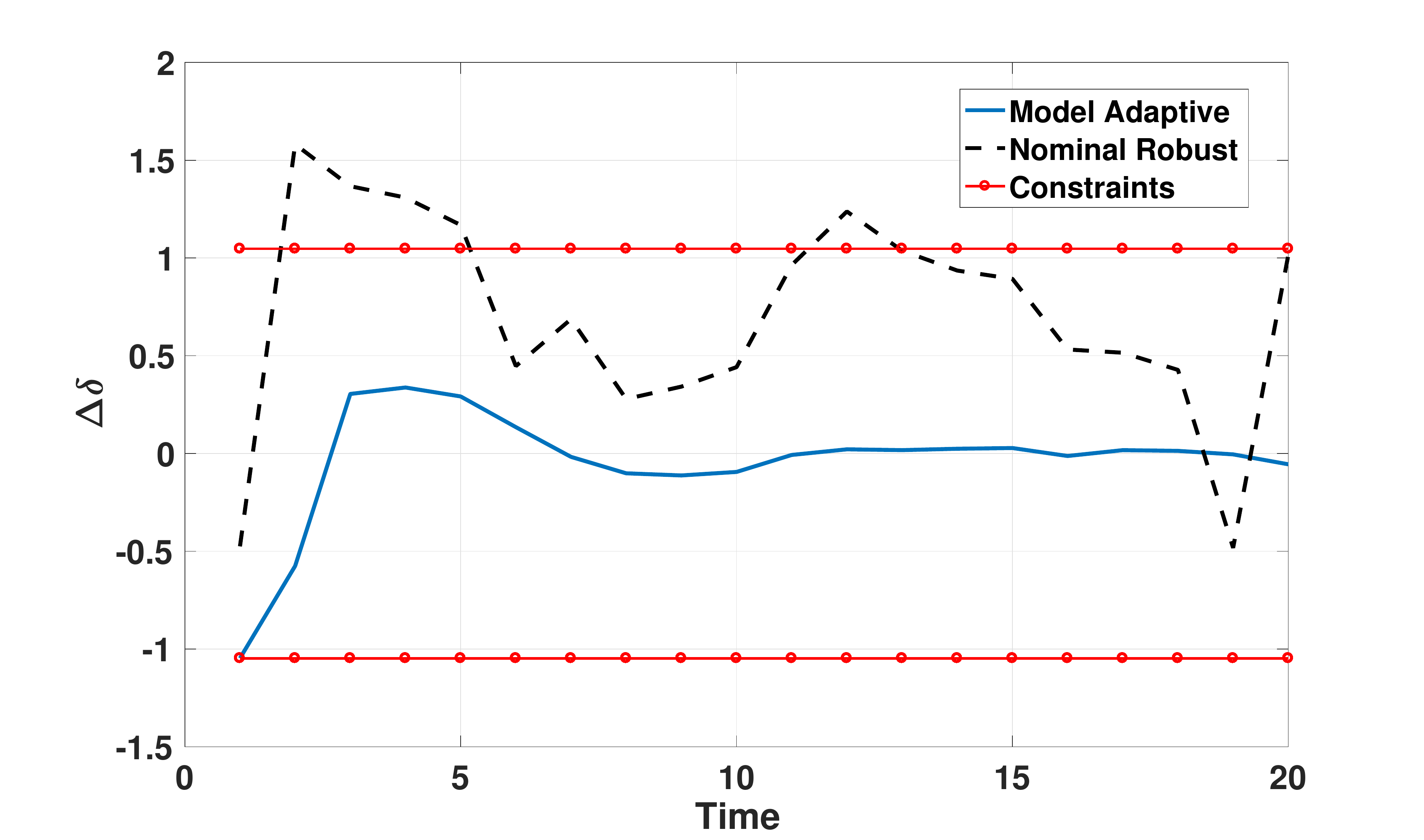}
	\caption{Adaptive vs Standard Robust MPC Control}
	\label{fig:Uconstr_bc}
\end{figure}
Here we compare our algorithm with a standard MPC formulated with a ``nominal" vehicle model that is robust against the disturbance $w \in \mathbb{W}$, but not against model biases. It has a wrong estimate of steering offset. Due to such model mismatch, the standard ``Nominal Robust'' MPC yields significant constraint violations. Contrarily, our algorithm is cautious and considers model uncertainty, thus always satisfies the constraints.

\subsection{Ease of Reset}
Now we demonstrate our algorithm's ability to avoid rigorous re-tuning, unlike a standard LQR controller. We consider a case when the steering offset value is slightly increased, after prolonged operation of the vehicle. We do not modify the initial Feasible Parameter Set $\Theta_0$ and the weights in our algorithm. Comparison shown in Fig.~\ref{fig:reset_c_ampc} and Fig.~\ref{fig:reset_c_lqr} highlight that unlike our controller, the LQR controller does not satisfy the constraints, and demands re-tuning as the offset is altered.

\begin{figure}[H] 
	\centering
	\includegraphics[width=0.5\textwidth]{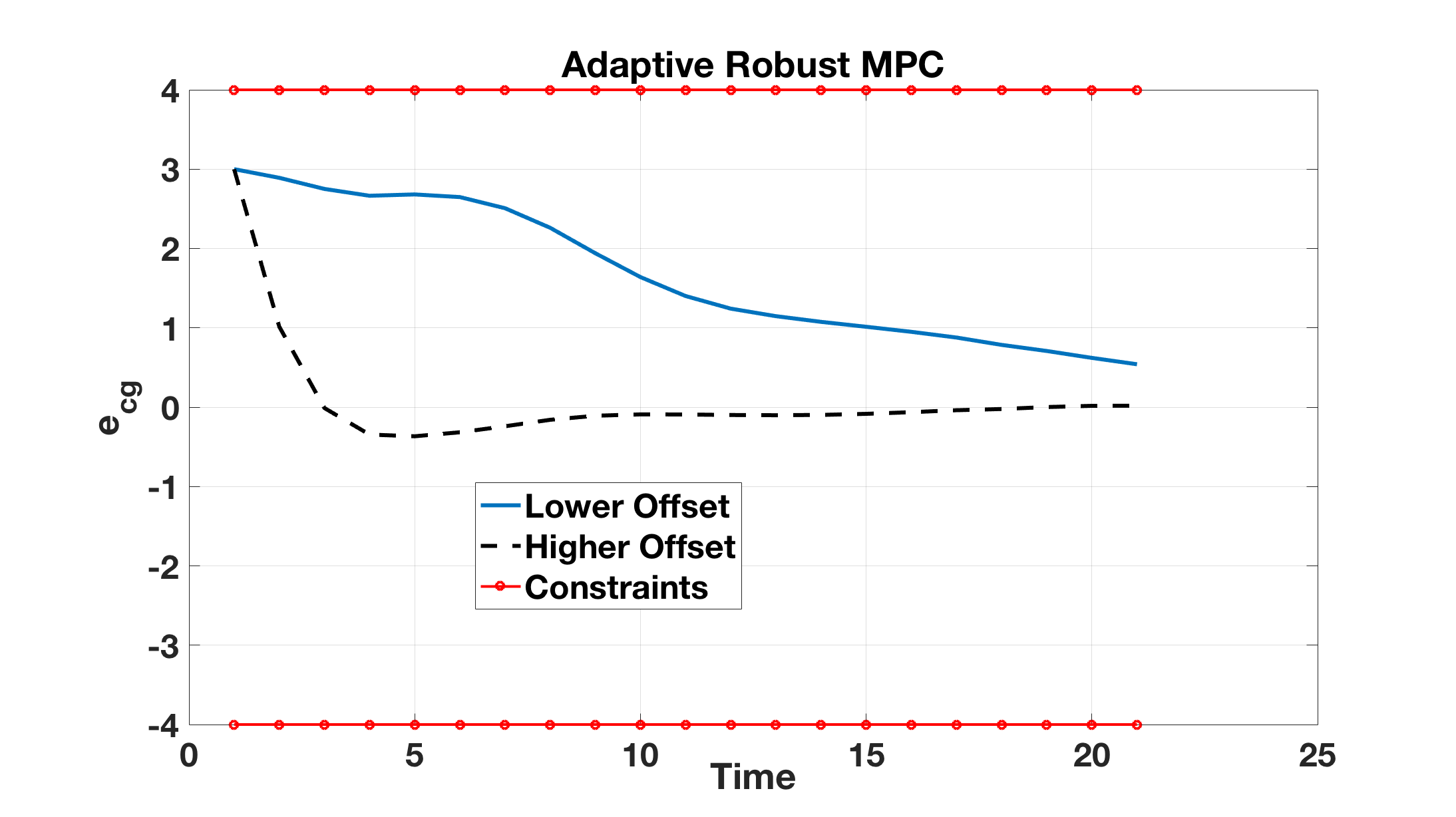}
	\caption{Adaptive MPC for Varying Offset}
	\label{fig:reset_c_ampc}
\end{figure}

\begin{figure}[H] 
	\centering
	\includegraphics[width=0.5\textwidth]{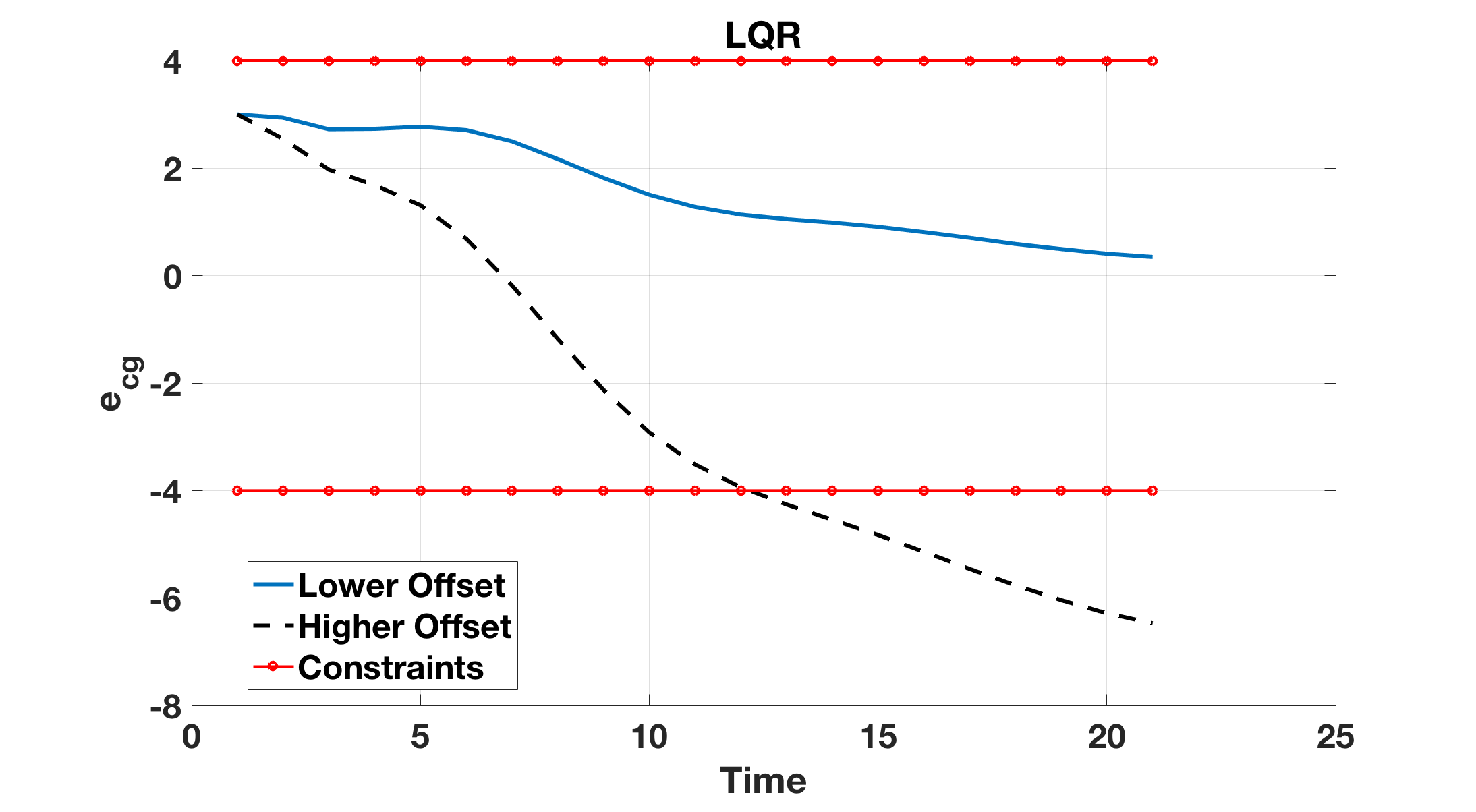}
	\caption{LQR for Varying Offset}
	\label{fig:reset_c_lqr}
\end{figure}

\section{CONCLUSIONS AND FUTURE WORK}\label{sec:concl}
In this paper, we developed an Adaptive Robust MPC strategy for lane-keeping of vehicles in presence of a steering angle offset. We showed that with every time step, the knowledge of the offset is learned and thus improved. The algorithm is robust against the true steering angle offset and satisfies imposed constraints recursively. Thus, the Adaptive Robust MPC is shown to have improved performance over a standard MPC, which is not cognizant of the model mismatch. We also demonstrated that our algorithm is easy to reset and implement without re-tuning from the scratch, if any variation in the uncertainty is suspected over time. Such aspects can be useful in works such as \cite{bujarbaruahlyap}. In future extensions of this work, we aim to solve a Robust MPC problem with time varying uncertainty. 


\printbibliography

\appendix
\section{Appendix}
\subsection{Matrix Definitions}
The matrices pertaining to the lateral dynamics of the vehicle, $A^\textnormal{c}, B^\textnormal{c}_1$ and $B^\textnormal{c}_2$ in (\ref{eq:car_M_brief}) are defined as,
\begin{align*}
    A^\textnormal{c} & = \begin{bmatrix} 0 & 1 & 0 & 0\\ 0 & -\frac{c_{\textnormal{f}}+c_{\textnormal{r}}}{MV_x} & \frac{c_{\textnormal{f}}+c_{\textnormal{r}}}{M} & \frac{l_{\textnormal{r}}c_{\textnormal{r}}-l_{\textnormal{f}}c_{\textnormal{f}}}{MV_x}\\ 0 & 0 & 0 & 1\\ 0 & \frac{l_{\textnormal{r}}c_{\textnormal{r}}-l_{\textnormal{f}}c_{\textnormal{f}}}{J_zV_x}& \frac{l_{\textnormal{f}}c_{\textnormal{f}}-l_{\textnormal{r}}c_{\textnormal{r}}}{J_z} & -\frac{l_{\textnormal{f}}^2c_{\textnormal{f}}+l_{\textnormal{r}}^2c_{\textnormal{r}}}{J_zV_x} \end{bmatrix},\\
    B^\textnormal{c}_1 & = [0, \frac{c_{\textnormal{f}}}{M}, 0, \frac{l_{\textnormal{f}}c_{\textnormal{f}}}{J_z}]^\top,\\
    B_2^\textnormal{c} & = [0, \frac{l_{\textnormal{r}}c_{\textnormal{r}}-l_{\textnormal{f}}c_{\textnormal{f}}}{MV_x}-V_x, 0, -\frac{l_{\textnormal{f}}^2c_{\textnormal{f}} + l_{\textnormal{r}}^2c_{\textnormal{r}}}{J_zV_x}]^\top,
\end{align*}
where $M$ is the mass, $J_z$ is the moment of inertia about the vertical axis, $l_{\textnormal{f}},l_{\textnormal{r}}$ are distances from center of gravity to front and rear axles respectively, and $c_{\textnormal{f}},c_{\textnormal{r}}$ are cornering stiffnesses of front and rear tires respectively, of the vehicle considered. $V_x$ is the vehicle longitudinal speed.

\subsection{MPC Reformulation}
In this section we show how the robust MPC problem (\ref{eq:rob_MPC}) can be reformulated and efficiently solved. The constraints in (\ref{eq:rob_MPC}) can be compactly written with similar notations as \cite{Goulart2006},
\begin{equation}\label{eq:com_con_appen}
    F_t\mathbf{v_t} + G_t(\mathbf{w}_t + \mathbf{E}\pmb{\theta_t}) \leq c_t + H_t \Delta x_{t},
\end{equation}
where we denote, $\mathbf{v_t} = [v_{t|t}^\top, v_{t+1|t}^\top, \cdots, v_{t+N-1|t}^\top]^\top \in \mathbb{R}^{mN}$, $\pmb{\theta_t} = [\theta_{t|t}^\top, \cdots, \theta_{t+N-1|t}^\top]^\top\in \mathbb{R}^{pN}$ for all $\theta_{k|t} \in \Theta_t,~\forall k\in [t,\cdot,t+N-1]$,  $\mathbf{E} = \textnormal{diag}(E,\cdots,E) \in \mathbb{R}^{nN \times pN}$ and $\mathbf{w_t} = [w_{t|t}^\top, \cdots, w_{t+N-1|t}^\top]^\top \in \mathbb{R}^{nN}$. 

The matrices above in (\ref{eq:com_con_appen}) $F_t \in \mathbb{R}^{(sN+r_t)\times mN}, G_t \in \mathbb{R}^{(sN+r_t)\times nN}, c_t \in \mathbb{R}^{sN+r_t}$ and $H_t \in \mathbb{R}^{(sN+r_t)\times n}$ are formed after expressing all states and constraints in terms of the initial condition $\Delta x_t$ before the finite horizon problem (\ref{eq:rob_MPC}) is solved. They can be obtained as,
\begin{align*}
&F_t = \begin{bmatrix}D&\mathbf{0}&\cdots&\mathbf{0}\\\bar{C}B_1 & D & \cdots & \mathbf{0} \\ \vdots & \ddots & \ddots & \vdots \\ \bar{C}A_{\textnormal{cl}}^{N-2}B_1 & \bar{C}A_{\textnormal{cl}}^{N-3}B_1 & \cdots & D\\Y_tA_{\textnormal{cl}}^{N-1}B_1 & Y_tA_{\textnormal{cl}}^{N-2}B_1 & \cdots & Y_tB_1 \end{bmatrix}\\
&G_t=\begin{bmatrix} \mathbf{0}&\mathbf{0}&\cdots & \mathbf{0}\\\bar{C}&\mathbf{0}&\cdots&\mathbf{0}\\ \vdots & \ddots & \ddots & \vdots \\ \bar{C}A_{\textnormal{cl}}^{N-2}&\bar{C}A_{\textnormal{cl}}^{N-3}&\cdots&\mathbf{0}\\Y_tA_{\textnormal{cl}}^{N-1}&Y_tA_{\textnormal{cl}}^{N-2}&\cdots&Y_t \end{bmatrix},\\
&c_t = [b^\top,\cdots,b^\top,z_t^\top],\\
&H_t=-[\bar{C}^\top,(\bar{C}A_{\textnormal{cl}})^\top,\cdots,(\bar{C}A_{\textnormal{cl}}^{N-1})^\top,(Y_tA_{\textnormal{cl}}^N)^\top]^\top,
\end{align*}
where $\bar{C}=C-DK$. So, while attempting to solve (\ref{eq:com_con_appen}) at time $t$ for robust constraint satisfaction, we must have,
\begin{equation}\label{eq:com_con_max_appen}
    F_t\mathbf{v_t} +  \max_{\mathbf{w_t},\pmb{\theta_t}} G_t(\mathbf{w}_t + \mathbf{E}\pmb{\theta_t}) \leq c_t + H_t \Delta x_{t}.
\end{equation}
Therefore, the above equation (\ref{eq:com_con_max_appen}) indicates that (\ref{eq:rob_MPC}) can be solved by imposing constraints (\ref{eq:constraints}) on nominal states ($\Delta 
\bar x_{k|t}$) after tightening them throughout the horizon (of length $N$) by $\max_{\mathbf{w_t},\pmb{\theta_t}}(\mathbf{w}_t + \mathbf{E}\pmb{\theta_t})$ \cite[sec.~3.2]{kouvaritakis2016model}. Alternatively, denote the polytope $\mathbb{S}_{t}=\{w \in \mathbb{W},~\theta \in \Theta_t: S_t(\mathbf{w}+ \mathbf{E}\pmb{\theta}) \leq h_{t},~S_{t} \in \mathbb{R}^{a_t \times nN},~ h_t \in \mathbb{R}^{a_t}\}$, then (\ref{eq:com_con_max_appen}) can be written with auxiliary decision variables $Z_t \in \mathbb{R}^{a_t \times (sN+r_t)}$ using the concept of  \emph{duality} of linear programs as,
\begin{align}\label{eq:com_con_dual}
& F_t\mathbf{v_t} + Z_t^\top h_t \leq c_t + H_t \Delta x_t,~G_t = Z_t^\top S_t,~Z_t  \geq 0,
\end{align}
which is a tractable linear programming problem that can be efficiently solved with any existing solver for real time implementation of the algorithm.

\end{document}